\newtheorem{lemma}{Lemma}
\newtheorem{theorem}{Theorem}
\newtheorem{remark}{Remark}
\newtheorem{corollary}{Corollary}
\newtheorem{observation}{Observation}
\newcommand{\BSC}[1]{{{ \text{BSC}#1}}}
\title{A Class of Transformations that Polarize Symmetric Binary-Input Memoryless Channels}
\author{Satish Babu Korada\quad and\quad Eren \c Sa\c so\u glu}
\begin{document}
\maketitle
\begin{abstract}

A generalization of Ar\i kan's polar code construction using transformations of the form $G^{\otimes n}$ where $G$ is an $\ell \times \ell$ matrix is considered. Necessary and sufficient conditions are given for these transformations to ensure channel polarization. It is shown that a large class of such transformations polarize symmetric binary-input memoryless channels. 
\end{abstract}
\section{Introduction}
Polar codes, introduced by Ar\i kan in \cite{Ari08}, are the first provably capacity achieving codes for arbitrary symmetric binary-input discrete memoryless channels (B-DMC) with low encoding and decoding complexity. Polar code construction is based on the following observation: Let 
\begin{align}\label{eqn:2by2}
G_2=\left[
\begin{array}{cc}
1 & 0 \\
1 & 1
\end{array}
\right].
\end{align}
Consider applying the transform $G_2^{\otimes n}$ (where ``$\phantom{}^{\otimes n}$'' denotes the $n^{th}$ Kronecker power) to a block of $N = 2^n$ bits and transmitting the output through independent copies of a B-DMC $W$ (see Figure \ref{fig:transform}). 
 As $n$ grows large, the channels seen by individual bits (suitably defined in \cite{Ari08}) start \emph{polarizing}: they approach either a noiseless channel or a pure-noise channel, where the fraction of channels becoming noiseless is close to the symmetric mutual information $I(W)$.

It was conjectured in \cite{Ari08} that polarization is a general phemonenon, and is not restricted to the particular transformation $G_2^{\otimes n}$. In this note we give a partial affirmation to this conjecture. In particular, we consider transformations of the form $G^{\otimes n}$ where $G$ is an $\ell\times\ell$ matrix for $\ell \geq 3$ and provide necessary and sufficient conditions for such $G$s to polarize symmetric B-DMCs. 

\begin{figure}[ht]\label{fig:transform}
\input{transform.tex}
\caption{ }
\end{figure}

\section{Preliminaries}

Let $W: \{0,1\} \to \mathcal{Y}$ be a B-DMC. Let $I(W) \in [0,1]$ denote the mutual information between the input and output of $W$ with uniform distribution on the inputs. Also let $Z(W) \in [0,1]$ denote the Bhattacharyya parameter of $W$, i.e., $Z(W) = \sum_{y\in\mathcal{Y}} \sqrt{W(y|0)W(y|1)}$. 

Fix an $\ell\geq 3$ and an invertible $\ell\times\ell$ $\{0,1\}$ matrix $G$. Consider a random $\ell$-vector $U_1^\ell$ that is uniformly distributed over $\{0,1\}^\ell$. Let $X_1^\ell = U_1^\ell G$, where the multiplication is performed over GF(2). Also let $Y_1^\ell$ be the output of $\ell$ uses of $W$ with the input $X_1^\ell$. Observe now that the channel between $U_1^\ell$ and $Y_1^\ell$ is defined by the transition probabilities \begin{align*}
W_\ell (y_1^\ell\mid u_1^\ell) \triangleq \prod_{i=1}^\ell W(y_i\mid x_i) = \prod_{i=1}^\ell W(y_i\mid (u_1^\ell G)_i).
\end{align*}
Define $W^{(i)}: \{0,1\} \to \mathcal{Y}^\ell \times \{0,1\}^{i-1}$ as the channel with input $u_i$, output $(y_1^\ell,u_1^{i-1})$ and transition probabilities
\begin{align*}
W^{(i)}(y_1^\ell,u_1^{i-1}\mid u_i) = \frac{1}{2^{\ell-1}} \sum_{u_{i+1}^\ell} W_\ell(y_1^\ell\mid u_1^\ell),
\end{align*}
and let $Z^{(i)}$ denote its Bhattacharyya parameter, i.e., 
\begin{align*}
Z^{(i)} = \sum_{y_1^\ell,u_1^{i-1}} \sqrt{W^{(i)}(y_1^\ell,u_1^{i-1}\mid 0)W^{(i)}(y_1^\ell,u_1^{i-1}\mid 1)}.
\end{align*}
For $k\geq 1$, let $W^k: \{0,1\}\to\mathcal{Y}^k$ denote the B-DMC with transition probabilities 
\begin{align*}
W^k(y_1^k\mid x) = \prod_{j=1}^k W(y_j\mid x).
\end{align*}
Also let $\tilde{W}^{(i)}: \{0,1\} \to \mathcal{Y}^\ell$ denote the B-DMC with transition probabilities
\begin{align} 
\tilde{W}^{(i)}(y_1^\ell\mid u_i) = \frac{1}{2^{\ell-i}} \sum_{u_{i+1}^\ell} W_\ell(y_1^\ell\mid 0_1^{i-1},u_i^\ell).
\end{align} 
\begin{observation}\label{obs:equivalent}
If $W$ is symmetric, then the channels $W^{(i)}$ and $\tilde{W}^{(i)}$ are equivalent in the sense that for any fixed $u_1^{i-1}$ there exists a permutation $\pi_{u_1^{i-1}}: \mathcal{Y}^\ell \to \mathcal{Y}^\ell$ such that 
\[
W^{(i)}(y_1^\ell, u_1^{i-1} \mid u_i) = \frac{1}{2^{i-1}}\tilde{W}^{(i)}(\pi_{u_1^{i-1}}(y_1^\ell)\mid u_i).
\]
\end{observation}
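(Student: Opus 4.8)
The plan is to exploit that, once the prefix $u_1^{i-1}$ is fixed, these bits affect the channel $W_\ell$ only through an additive, coordinatewise shift of the transmitted vector $X_1^\ell = U_1^\ell G$, and that for a symmetric $W$ such a shift can be absorbed into a relabeling of the output alphabet $\mathcal{Y}$. The permutation $\pi_{u_1^{i-1}}$ will be precisely the coordinatewise relabeling induced by that shift.

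First I would make the notion of symmetry precise in the convenient form: there is an involution $\sigma:\mathcal{Y}\to\mathcal{Y}$ (so $\sigma=\sigma^{-1}$) with $W(y\mid 1)=W(\sigma(y)\mid 0)$ for all $y$. Writing $\sigma^0=\mathrm{id}$ and $\sigma^1=\sigma$, a two-case check (using $\sigma\circ\sigma=\mathrm{id}$ when $a=1$) gives the single-letter identity
\begin{align*}
W(y\mid a\oplus x) = W(\sigma^a(y)\mid x), \qquad a,x\in\{0,1\},\ y\in\mathcal{Y}.
\end{align*}
Next, fix $u_1^{i-1}$ and split the GF(2)-linear map by linearity: for every $u_i^\ell$,
\begin{align*}
(u_1^\ell G)_j = a_j \oplus \bigl((0_1^{i-1},u_i^\ell)G\bigr)_j, \qquad a_1^\ell \triangleq (u_1^{i-1},0,\dots,0)\,G,
\end{align*}
so that $a_1^\ell$ depends only on $u_1^{i-1}$. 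Applying the single-letter identity coordinate by coordinate to the product defining $W_\ell$ yields
\begin{align*}
W_\ell(y_1^\ell\mid u_1^\ell) = \prod_{j=1}^\ell W\bigl(\sigma^{a_j}(y_j)\mid ((0_1^{i-1},u_i^\ell)G)_j\bigr) = W_\ell\bigl(\pi_{u_1^{i-1}}(y_1^\ell)\mid 0_1^{i-1},u_i^\ell\bigr),
\end{align*}
where $\pi_{u_1^{i-1}}(y_1^\ell)\triangleq(\sigma^{a_1}(y_1),\dots,\sigma^{a_\ell}(y_\ell))$. Since each $\sigma^{a_j}$ is a bijection of $\mathcal{Y}$, the map $\pi_{u_1^{i-1}}$ is a permutation of $\mathcal{Y}^\ell$. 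Finally I would sum both sides over $u_{i+1}^\ell$, divide by $2^{\ell-1}$, and identify the right-hand side with $2^{\ell-i}\,\tilde W^{(i)}(\pi_{u_1^{i-1}}(y_1^\ell)\mid u_i)$ using the definition of $\tilde W^{(i)}$; since $2^{\ell-i}/2^{\ell-1}=2^{1-i}$, this is exactly $\frac{1}{2^{i-1}}\tilde W^{(i)}(\pi_{u_1^{i-1}}(y_1^\ell)\mid u_i)$, as claimed.

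There is no genuine obstacle in this argument; the only points requiring care are phrasing symmetry via the involution $\sigma$ (and invoking $\sigma^2=\mathrm{id}$ in the single-letter identity), and checking that a coordinatewise product of permutations of $\mathcal{Y}$ is indeed a permutation of $\mathcal{Y}^\ell$, so that $\pi_{u_1^{i-1}}$ is a legitimate relabeling of outputs. Everything else is bookkeeping with the linear structure of $G$ and with the normalizing powers of two.
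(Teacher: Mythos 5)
Your proof is correct: the paper states this as an Observation without proof, and your argument (absorbing the fixed prefix $u_1^{i-1}$ into a coordinatewise output relabeling via the symmetry involution $\sigma$, using GF(2)-linearity to split $u_1^\ell G$, and then matching the normalizations $2^{\ell-i}/2^{\ell-1}=2^{1-i}$) is precisely the standard argument the authors leave implicit, in the spirit of Proposition 13 of Ar\i kan's paper. No gaps.
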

Finally, let $I^{(i)}$ denote the mutual information between the input and output of channel $W^{(i)}$. Since $G$ is invertible, it is easy to check that 
\begin{align*}
\sum_{i=1}^\ell I^{(i)} = \ell I(W). 
\end{align*}

\section{Polarization}


We will say that $G$ is a \emph{polarizing} matrix if there exists an $i\in\{1,\dotsc,\ell\}$ for which $\tilde{W}^{(i)}$ is equivalent to $W^k$ for some $k\geq 2$, in the sense that
\begin{align}\label{eqn:polarizing}
\tilde{W}^{(i)}(y_1^\ell\mid u_i) = c \prod_{j\in A}W(y_j \mid u_i) 
\end{align}
for some constant $c$ and $A\subseteq \{1,\dots,\ell\}$ with $\vert A \vert = k$. If $W$ is symmetric, then Observation \ref{obs:equivalent} implies the equivalence of $W^{(i)}$ and $W^k$ (which we denote by $W^{(i)} \equiv W^k$) in the sense that 
\begin{align}
W^{(i)}(y_1^\ell,u_1^{i-1} \mid u_i) = \frac{c}{2^{i-1}} \prod_{j\in A}W((\pi_{u_1^{i-1}}(y_1^\ell))_j\mid u_i).
\end{align}
Note that the equivalence $W^{(i)} \equiv W^k$ implies $I^{(i)} = I(W^k)$ and $Z^{(i)} = Z(W^k)$.

It will be shown that channel transformations of the form $G^{\otimes n}$ polarize symmetric channels if and only if $G$ is polarizing. This statement is made precise in the following theorem:

\begin{theorem} \label{thm:main}
Fix a symmetric B-DMC $W$. Let $G^{\otimes n}$ denote the $n^{th}$ Kronecker power of $G$ and consider the transformation $G^{\otimes n}: W \to (W^{(i)}: i=1,\dotsc, \ell^n)$. 
\begin{itemize}
\item[i.] If $G$ is polarizing, then for any $\delta > 0$
\begin{align*}
\lim_{n \to \infty}\frac{\#\left\{i\in\{1,\dotsc,\ell^n\}: I(W^{(i)}) \in (\delta, 1-\delta) \right\}}{\ell^n} = 0.
\end{align*}
\item[ii.] If $G$ is not polarizing, then 
\begin{align*}
I(W^{(i)}) = I(W) \textrm{ for all } n \textrm{ and } i\in\{1,\dotsc,\ell^n\}.
\end{align*}
\end{itemize}
\end{theorem}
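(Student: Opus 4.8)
The plan is to handle the two parts by different mechanisms. Part (i) will follow an Ar\i kan-style martingale argument on a random walk down the $\ell$-ary channel tree, the only genuinely new ingredient being the polarizing index; part (ii) will rest on a purely combinatorial structural lemma about $G$ together with a short direct computation. A preliminary step I would record first is the $\ell$-ary analogue of Ar\i kan's channel-combining/splitting recursion: the channels $W^{(i)}$, $i\in\{1,\dots,\ell^n\}$, produced by $G^{\otimes n}$ are exactly those obtained by applying the single-level transform $W\mapsto(W^{(1)},\dots,W^{(\ell)})$ of $G$ to $W$, then to each resulting channel, and so on $n$ times. This follows from the block structure $G^{\otimes n}=G\otimes G^{\otimes(n-1)}$ by routine bookkeeping of Kronecker indices.

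For part (i), let $(B_m)_{m\ge1}$ be i.i.d.\ uniform on $\{1,\dots,\ell\}$, set $W_0=W$ and $W_m=(W_{m-1})^{(B_m)}$; each $W_m$ is again a symmetric B-DMC, so the definition of polarizing and Observation~\ref{obs:equivalent} apply to it. By the recursion lemma the empirical distribution of $\{I(W^{(i)}):i\le\ell^n\}$ equals the law of $I(W_n)$, so it suffices to show $I(W_m)\to I_\infty$ with $I_\infty\in\{0,1\}$ a.s. From the conservation identity $\sum_iI^{(i)}=\ell I(W)$ we get $E[I(W_{m+1})\mid W_m]=I(W_m)$, so $I(W_m)$ is a bounded martingale and converges a.s. To identify the limit I would bring in $Z_m=Z(W_m)$: a convexity estimate in the spirit of Ar\i kan's analysis gives $\sum_iZ(W^{(i)})\le\ell\,Z(W)$, so $Z_m$ is a bounded supermartingale and converges a.s.\ as well. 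Now the polarizing hypothesis enters: it is a combinatorial property of $G$ alone, so there is a \emph{fixed} index $i^\ast$ with $\tilde W^{(i^\ast)}\equiv V^k$, $k\ge2$, for \emph{every} symmetric $V$; hence $Z(W_m^{(i^\ast)})=Z(W_m)^k\le Z(W_m)^2$. Since $B_{m+1}=i^\ast$ with probability $1/\ell$ (so this happens infinitely often a.s.) and $Z_{m+1}-Z_m\to0$ a.s., this forces $Z_\infty^k-Z_\infty=0$, i.e.\ $Z_\infty\in\{0,1\}$ a.s. The Ar\i kan relations between $I$ and $Z$ (quantitative versions of ``$Z(V)\to0\Leftrightarrow I(V)\to1$'' and ``$Z(V)\to1\Leftrightarrow I(V)\to0$'') then give $I_\infty\in\{0,1\}$ a.s., and the stated limit follows since $\#\{i:I(W^{(i)})\in(\delta,1-\delta)\}/\ell^n\le P(I(W_n)\in[\delta,1-\delta])$ and $\limsup_nP(I(W_n)\in[\delta,1-\delta])\le P(I_\infty\in[\delta,1-\delta])=0$ by weak convergence.

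For part (ii), the heart is a structural lemma: \emph{if $G$ is not polarizing, then some column permutation of $G$ is upper triangular.} I would prove this by downward induction on $i=\ell,\ell-1,\dots,1$, analyzing the coset $g_i+R_{i+1}$, where $g_i$ is the $i$-th row of $G$ and $R_{i+1}=\mathrm{span}(g_{i+1},\dots,g_\ell)$. Since $\tilde W^{(i)}(y_1^\ell\mid u_i)=|R_{i+1}|^{-1}\sum_{v\in R_{i+1}}\prod_mW(y_m\mid(u_ig_i+v)_m)$, one checks that this has the product form \eqref{eqn:polarizing} with $|A|=k$ precisely when the coset $g_i+R_{i+1}$ contains an indicator vector $\mathbf 1_A$ whose support $A$ is disjoint from the support $M$ of $R_{i+1}$. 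Failure of this for all $i$ forces, at the top step, $g_\ell$ to have weight one, and then, using the echelon structure accumulated so far (which makes $R_{i+1}$ a coordinate subspace), inductively forces each $g_i$ to be supported on coordinates $\{p_i,\dots,p_\ell\}$ with a $1$ in a fresh coordinate $p_i$ --- i.e.\ a column permutation of an upper triangular matrix. It then remains to check by direct computation that if $G$ is a column permutation of an invertible upper triangular $\{0,1\}$ matrix, then $W^{(i)}\equiv W$ and hence $I^{(i)}=I(W)$: permuting the outputs reduces to the upper triangular case, where, with $u_1^{i-1}$ fixed, $x_1^{i-1}$ is determined, $x_i$ is an affine bijection of $u_i$, and summing $W_\ell$ over $u_{i+1}^\ell$ renders $y_{i+1}^\ell$ independent of $u_i$ (the partial sum over $x_{i+1}^\ell$ telescopes to $1$). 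Finally $G^{\otimes n}$ is again a column permutation of an upper triangular matrix, since $(GP)^{\otimes n}=G^{\otimes n}P^{\otimes n}$ with $P^{\otimes n}$ a permutation matrix and Kronecker powers of upper triangular matrices are upper triangular; hence $I(W^{(i)})=I(W)$ at every level.

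The step I expect to be the main obstacle is the combinatorial structural lemma in part (ii): one must argue that the \emph{local} failure of the product condition \eqref{eqn:polarizing} at each individual $i$ assembles into the \emph{global} statement that $G$ is column-permutation-equivalent to an upper triangular matrix, which requires propagating the echelon structure built at steps $\ell,\dots,i+1$ into the coset analysis at step $i$. For part (i), the only non-routine analytic inputs are the supermartingale inequality $\sum_iZ(W^{(i)})\le\ell\,Z(W)$ and the extremal relations tying $I(W)$ to $Z(W)$, both of which are available from Ar\i kan's framework; everything else is a direct adaptation of his martingale argument.
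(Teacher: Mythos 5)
Your part (ii) and the overall architecture of part (i) (the $\ell$-ary tree process, the bounded $I$-martingale, reading the statement off the law of $I(W_n)$) line up with the paper, and your structural lemma ``not polarizing $\Rightarrow$ column permutation of an upper triangular matrix'' is exactly the contrapositive of Lemma~\ref{lemma:square}(i), which the paper proves by peeling off the last row and column rather than by your coset analysis; both routes are workable. The genuine gap is in how you identify the limit. Your argument runs entirely through the $Z$-process: you need $Z_m \to Z_\infty$ a.s.\ \emph{before} you can exploit the recurrence $Z_{m+1}=Z_m^k$ at the polarizing index, and you obtain that convergence from the claimed supermartingale inequality $\sum_{i=1}^{\ell}Z(W^{(i)})\le \ell\, Z(W)$, which you describe as ``available from Ar\i kan's framework.'' It is not: Ar\i kan proves $Z(W^+)+Z(W^-)\le 2Z(W)$ only for the $2\times 2$ kernel, by a computation specific to that kernel, and no analogue for general $\ell\times\ell$ kernels is established in this paper or, to my knowledge, elsewhere (for the BEC one has equality, so the inequality, if true, is tight and needs a real proof). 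Without it you have no a.s.\ convergence of $Z_m$: the relations $I+Z\ge 1$ and $I^2+Z^2\le 1$ only sandwich $Z_m$ between $1-I_m$ and $\sqrt{1-I_m^2}$, which does not force $Z_m$ to converge unless you already know $I_\infty\in\{0,1\}$ --- the very thing you are trying to prove. So the step ``$Z_{m+1}-Z_m\to 0$ a.s.'' is unsupported and the identification of the limit collapses.

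The paper closes exactly this hole by staying on the $I$-side. From $\mathcal{L}^1$-convergence of the martingale, $\mathbb{E}[|I_{m+1}-I_m|]\to 0$; since $I_{m+1}=I(W_m^k)$ with probability at least $1/\ell$ and $I(W^k)\ge I(W)$ always, this forces $\mathbb{E}[I(W_m^k)-I(W_m)]\to 0$ as in (\ref{eqn:I-repetition}). The one non-trivial analytic input is then the extremes-of-information-combining bound (Theorem~\ref{thm:extremes}, from \cite{SSZ05,HuH05}), which gives $I(W^k)-I(W)\ge h(2\epsilon\bar{\epsilon})-h(\epsilon)$ with $\epsilon=h^{-1}(1-I(W))$, a quantity bounded away from $0$ whenever $I(W)\in(\delta,1-\delta)$ (Lemma~\ref{lem:combinedI}); hence $I_\infty\in\{0,1\}$ a.s.\ directly. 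Your proposal never invokes this ingredient. To repair your proof, either import this information-combining bound and argue on $I$ as the paper does, or supply an actual proof of $\sum_i Z(W^{(i)})\le \ell Z(W)$ for general polarizing kernels --- which I do not believe can be done routinely.
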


Theorem \ref{thm:main} is a direct consequence of Lemmas \ref{lemma:square} and \ref{lemma:I-converges} below.

Note that any invertible $\{0,1\}$ matrix $G$ can be written as a (real) sum $G=P+P'$, where $P$ is a permutation matrix, and $P'$ is a $\{0,1\}$ matrix. This fact can be inferred from Hall's Theorem \cite[Theorem 16.4.]{BoM08}. Therefore, for any such matrix $G$, there exists a column permutation that results in $G_{ii} = 1$ for all $i$. Since the transition probabilities defining $W^{(i)}$ are invariant (up to a permutation of the outputs $y_1^\ell$) under column permutations on $G$, we only consider matrices with 1s on the diagonal. 

The following lemma gives necessary and sufficient conditions for (\ref{eqn:polarizing}) to be satisfied:

\begin{lemma}\label{lemma:square}
For any symmetric B-DMC $W$,
\begin{itemize} 
\item[i.] If $G$ is not upper triangular, then there exists an $i$ for which $W^{(i)} \equiv W^k$ for some $k\geq 2$.
\item[ii.] If $G$ is upper triangular, then $W^{(i)} \equiv W$ for all $1\leq i \leq \ell$.
\end{itemize} 
\end{lemma}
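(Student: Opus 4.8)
The plan is to handle both parts with a single computation of $\tilde W^{(i)}$, the parts differing only in the size of a set $A$ of ``information-bearing'' coordinates. Write $g_1,\dots,g_\ell$ for the rows of $G$; by the reduction made just above we may assume $G_{jj}=1$ for all $j$, and, setting $u_1^{i-1}=0$, the defining sum reads $\tilde W^{(i)}(y_1^\ell\mid u_i)=\tfrac{1}{2^{\ell-i}}\sum_{u_{i+1}^\ell}\prod_{j=1}^\ell W\big(y_j\mid (u_i g_i+\dots+u_\ell g_\ell)_j\big)$. For part (i), I would let $i$ be the largest row index having a $1$ strictly to the left of the diagonal, i.e.\ $G_{ib}=1$ for some $b<i$; such an $i$ exists precisely because $G$ is not upper triangular, and by its maximality every row $g_k$ with $k>i$ satisfies $G_{kb}=0$ for all $b<k$, hence is supported on $\{i+1,\dots,\ell\}$. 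For part (ii) the same support property holds for every row, so there $i$ is arbitrary.

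The key step is the observation that, since $G$ is invertible, the rows $g_{i+1},\dots,g_\ell$ are linearly independent and, being confined to the $(\ell-i)$-dimensional coordinate subspace $\{x:x_j=0\text{ for }j\le i\}$, they span it exactly. Therefore $v:=u_{i+1}g_{i+1}+\dots+u_\ell g_\ell$ is uniform over that whole subspace as $u_{i+1}^\ell$ varies. Substituting $x=u_ig_i+v$: for $j\le i$ one gets $x_j=u_i(g_i)_j$, while for $j>i$ one gets $x_j=u_i(g_i)_j+v_j$ with $v_j$ ranging freely, so those factors sum to $\prod_{j>i}\tfrac12\big(W(y_j\mid0)+W(y_j\mid1)\big)$, which does not depend on $u_i$. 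Collecting terms yields
\[
\tilde W^{(i)}(y_1^\ell\mid u_i)=q(y_1^\ell)\prod_{j\in A}W(y_j\mid u_i),\qquad A:=\{\,j\le i:G_{ij}=1\,\},
\]
with $q(y_1^\ell)$ independent of $u_i$; since the output coordinates outside $A$ are then input-independent noise, this is exactly the equivalence $\tilde W^{(i)}\equiv W^{|A|}$ of (\ref{eqn:polarizing}). Now $G_{ii}=1$ forces $i\in A$, so $|A|\ge1$ always. In part (ii), upper-triangularity gives $G_{ij}=0$ for $j<i$, hence $A=\{i\}$ and $\tilde W^{(i)}\equiv W$; in part (i) the chosen $b<i$ also lies in $A$, so $|A|\ge2$. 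A final appeal to Observation \ref{obs:equivalent} promotes $\tilde W^{(i)}\equiv W^k$ to $W^{(i)}\equiv W^k$, giving both claims.

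The displayed factorization is routine bookkeeping once the spanning fact is established, as is the (standard) reading that ``equivalent to $W^k$'' discards output coordinates whose conditional law is input-free. The only genuinely load-bearing point—hence the one to get right—is the choice of $i$ in part (i): it must be chosen so that all rows below it are confined to the last $\ell-i$ coordinates, since this is exactly what makes $v$ sweep out the full coordinate subspace, thereby pinning $x_j$ to $u_i$ for $j\in A$ while reducing every remaining coordinate to noise. I do not anticipate any other real obstacle.
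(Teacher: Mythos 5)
Your proposal is correct, and it identifies exactly the same critical index as the paper: the lowest row of $G$ containing a $1$ strictly below the diagonal. The difference is in execution. The paper proceeds recursively from the bottom: it notes $W^{(\ell)}\equiv W^k$ with $k$ the weight of the last row, and when that row is $e_\ell$ it shows via a conditional-probability factorization that $y_\ell$ is pure noise for $W^{(1)},\dotsc,W^{(\ell-1)}$, so these channels are governed by the submatrix $G^{(\ell-1)}$; iterating peels off one coordinate at a time until a row with two sub-diagonal-or-diagonal $1$s is hit. You instead do the whole computation in one shot: having chosen $i$ maximal with a sub-diagonal $1$, the rows $g_{i+1},\dotsc,g_\ell$ are $\ell-i$ independent vectors confined to the last $\ell-i$ coordinates, hence span that coordinate subspace, so summing over $u_{i+1}^\ell$ washes out all coordinates $j>i$ simultaneously and leaves $\prod_{j\in A}W(y_j\mid u_i)$ times an input-independent factor. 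This spanning/block-invertibility observation replaces the paper's induction and is arguably cleaner; it also makes the set $A$ (hence the exact value of $k=|A|$) explicit, whereas the paper only exhibits $k\geq 2$. Two small points, both of which you already flag and which the paper shares: the ``constant'' $c$ in (\ref{eqn:polarizing}) must be read as a $u_i$-independent function of $y_1^\ell$ (the paper's own proof produces such factors too), and the passage from $\tilde W^{(i)}$ to $W^{(i)}$ via Observation \ref{obs:equivalent} uses symmetry of $W$, which is hypothesized. No gaps.
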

\begin{proof}
Let $G^{(\ell-i)}$ be the $(\ell-i)\times (\ell-i)$ matrix obtained from $G$ by removing its last $i$ rows and columns. Let the number of 1s in the last row of $G$ be $k$. Clearly $W^{(\ell)} \equiv W^k$.  If $k\geq2$ then $G$ is not upper triangular and the first claim of the lemma holds. If $k=1$, then $W^{(\ell)} \equiv W$, and $(x_1,\dotsc,x_{\ell-1})$ is independent of $u_\ell$. One can then write
\begin{align*}
W^{(\ell-i)}(y_1^\ell,u_1^{\ell-i-1}\mid u_{\ell-i}) &= \frac{1}{2^{\ell-1}} \sum_{u_{\ell-i+1}^\ell} W_\ell(y_1^\ell \mid u_1^\ell) \\
& = \frac{1}{2^{\ell-1}} \sum_{u_{\ell-i+1}^{\ell-1},u_\ell} \Pr[Y_1^{\ell-1}= y_1^{\ell-1}\mid U_1^{\ell} = u_1^{\ell}] \Pr[Y_\ell = y_\ell\mid Y_1^{\ell-1} = y_1^{\ell-1},U_1^{\ell} = u_1^{\ell}]\\
&\stackrel{(a)}{ =} \frac{1}{2^{\ell-1}} \sum_{u_{\ell-i+1}^{\ell-1},u_\ell} W_{\ell-1}(y_1^{\ell-1}\mid u_1^{\ell-1}) \Pr[Y_\ell = y_\ell\mid Y_1^{\ell-1} = y_1^{\ell-1},U_1^{\ell} = u_1^{\ell}]\\
& = \frac{1}{2^{\ell-1}} \sum_{u_{\ell-i+1}^{\ell-1}}W_{\ell-1}(y_1^{\ell-1}\mid u_1^{\ell-1}) \sum_{u_\ell} \Pr[Y_\ell = y_\ell\mid Y_1^{\ell-1} = y_1^{\ell-1},U_1^{\ell} = u_1^{\ell}]\\
& = \frac{1}{2^{\ell-1}} \big[W(y_\ell\mid 0) + W(y_\ell\mid 1)\big]
\sum_{u_{\ell-i+1}^{\ell-1}} W_{\ell-1}(y_1^{\ell-1}\mid u_1^{\ell-1})
\end{align*}

where $(a)$ follows from the fact that $G_{lk} = 0,$ for all $k<\ell$. Therefore $y_\ell$ is independent of the inputs to the channels $W^{(\ell-i)}$ for $i=1,\dotsc,\ell-1$. This is equivalent to saying that channels $W^{(1)},\dotsc,W^{(\ell-1)}$ are defined by the matrix $G^{(\ell-1)}$. Applying the same argument to $G^{(\ell-1)}$ and repeating, we see that if $G$ is upper triangular, then we have $W^{(i)} \equiv W$ for all $i$. On the other hand, if $G$ is not upper triangular, then there either exists an $i$ for which $G^{(\ell-i)}$ has at least two 1s in the last row, which in turn implies $W^{(i)} \equiv W^k$ for some $k\geq 2$. \end{proof}

\begin{remark}
The above lemma says that all transformations that are not upper triangular are polarizing. Moreover, upper triangular transformations have no effect on the channel, i.e., each bit sees an independent copy of $W$ after an upper triangular transformation. 
\end{remark}

\begin{corollary}
For any polarizing transformation $G$, there exists an $i\in\{1,\dotsc,\ell\}$ and $k\geq 2$ for which 
\begin{align}
I^{(i)} & = I(W^k) \label{eqn:prod-channel} \\
Z^{(i)} & = Z(W)^k.
\end{align}
\end{corollary}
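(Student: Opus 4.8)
The plan is to obtain both identities almost directly from the definition of a polarizing matrix together with facts already recorded in the text; the single genuine computation is the multiplicativity of the Bhattacharyya parameter over product channels.

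First I would unwind the definition: since $G$ is polarizing, there is an index $i\in\{1,\dots,\ell\}$ and an integer $k\geq 2$ with $\tilde W^{(i)}(y_1^\ell\mid u_i)=c\prod_{j\in A}W(y_j\mid u_i)$ for some constant $c$ and some $A\subseteq\{1,\dots,\ell\}$ of size $k$, as in (\ref{eqn:polarizing}). (Alternatively, this existence statement can be read off from Lemma \ref{lemma:square}: a polarizing $G$ cannot be upper triangular, since part (ii) would otherwise force $W^{(i)}\equiv W$ for every $i$, and part (i) then supplies a suitable $i$ and $k$.) Because $W$ is symmetric, Observation \ref{obs:equivalent} upgrades this to the equivalence $W^{(i)}\equiv W^k$, and, as noted immediately after that observation, $W^{(i)}\equiv W^k$ gives $I^{(i)}=I(W^k)$ and $Z^{(i)}=Z(W^k)$. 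The first of these is exactly (\ref{eqn:prod-channel}).

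It then remains to show $Z(W^k)=Z(W)^k$. For this I would expand the definition of the Bhattacharyya parameter of $W^k$ and use $W^k(y_1^k\mid x)=\prod_{j=1}^k W(y_j\mid x)$: the term $\sqrt{W^k(y_1^k\mid 0)\,W^k(y_1^k\mid 1)}$ factors as $\prod_{j=1}^k\sqrt{W(y_j\mid 0)W(y_j\mid 1)}$, so the sum over $y_1^k\in\mathcal Y^k$ splits into a product of $k$ identical sums, each equal to $Z(W)$. Hence $Z^{(i)}=Z(W^k)=Z(W)^k$, which is the second claimed identity.

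I expect no real obstacle here: every nontrivial ingredient --- the structural equivalence $\tilde W^{(i)}\equiv W^k$, the passage from $\tilde W^{(i)}$ to $W^{(i)}$ for symmetric $W$, and the invariance of $I$ and $Z$ under channel equivalence --- is already available in the preceding text, and the only self-contained step is the interchange of a finite sum and a finite product in $Z(W^k)=Z(W)^k$. The one place to be slightly careful is that the constant $c$ in (\ref{eqn:polarizing}) is just the normalization making $\tilde W^{(i)}$ a genuine channel and plays no role in $I$ or $Z$; this is immediate and does not need separate treatment.
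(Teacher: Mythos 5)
Your proof is correct and follows the same route as the paper: the first identity is immediate from the equivalence $W^{(i)}\equiv W^k$, and the second reduces to the multiplicativity of the Bhattacharyya parameter over product channels, which you verify by the same factorization of the sum. The paper simply states these two facts without spelling out the computation.
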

\begin{proof}
The first claim is trivial. The second claim follows from the fact that the Bhattacharyya parameter of any channel of the form $\prod_j W_j$ is given by $\prod_j Z(W_j)$.
\end{proof}

\section{Convergence}
Consider recursively combining channels $W$ as in \cite{Ari08}, using a polarizing transformation $G$. Following Ar\i kan, associate to this construction a tree process $\{W_n; n\geq 0\}$ with 
\begin{align*}
W_0 & = W \\
W_{n+1} & = W_{n}^{(B_{n+1})},
\end{align*}
where $\{B_n; n\geq 1\}$ is a sequence of i.i.d.\ random variables defined on a probability space $(\Omega, \mathcal{F}, \mu )$, $B_n$ being uniformly distributed over the set $\{1,\dotsc,\ell\}$. Define $\mathcal{F}_0 = \{\emptyset, \Omega\}$ and $\mathcal{F}_n = \sigma (B_1,\dotsc, B_n)$ for $n \geq 1$. Define the processes $\{I_n; n\geq 0\} = \{I(W_n); n\geq 0\}$ and $\{Z_n; n\geq 0\} = \{Z(W_n); n\geq 0\}$. 

\begin{observation}
$\{(I_n, \mathcal{F}_n)\}$ is a bounded martingale and therefore converges a.s.\ and in $\mathcal{L}^1$ to a random variable $I_\infty$.
\end{observation}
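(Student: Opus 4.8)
The plan is to verify the defining martingale property by a one-step computation and then quote the standard convergence theorem. First I would observe that $I_n = I(W_n)$ takes values in $[0,1]$, so the process is uniformly bounded and in particular each $I_n$ is integrable; it is also $\mathcal{F}_n$-measurable, since by construction $W_n$ is a deterministic function of $B_1,\dots,B_n$.

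The substance of the argument is the identity $E[I_{n+1}\mid\mathcal{F}_n] = I_n$. Conditioned on $\mathcal{F}_n$ the channel $W_n$ is fixed, while $B_{n+1}$ is uniform on $\{1,\dots,\ell\}$ and independent of $\mathcal{F}_n$; hence
\begin{align*}
E[I_{n+1}\mid\mathcal{F}_n] = E\big[\,I(W_n^{(B_{n+1})})\mid\mathcal{F}_n\,\big] = \frac{1}{\ell}\sum_{i=1}^{\ell} I(W_n^{(i)}).
\end{align*}
Now I would apply the conservation relation $\sum_{i=1}^\ell I^{(i)} = \ell I(W)$ — which holds for an arbitrary B-DMC precisely because $G$ is invertible — with the channel $W_n$ playing the role of $W$. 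This yields $\sum_{i=1}^{\ell} I(W_n^{(i)}) = \ell\, I(W_n) = \ell I_n$, and therefore $E[I_{n+1}\mid\mathcal{F}_n] = I_n$, so $\{(I_n,\mathcal{F}_n)\}$ is a martingale.

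Since it is bounded in $[0,1]$, the martingale is uniformly integrable, and the martingale convergence theorem gives that $I_n$ converges almost surely and in $\mathcal{L}^1$ to a $[0,1]$-valued random variable $I_\infty$. I do not anticipate any real obstacle here; the only point requiring a moment's care is that the conservation identity is invoked "inside" the tree process, i.e.\ separately for each realization of $B_1^n$, which is legitimate because that identity holds for every B-DMC, not just for the initial channel $W$.
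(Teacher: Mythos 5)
Your proof is correct and follows exactly the route the paper intends: the paper leaves this as an unproved observation, relying implicitly on the conservation identity $\sum_{i=1}^\ell I^{(i)} = \ell I(W)$ (stated earlier as a consequence of the invertibility of $G$) together with boundedness in $[0,1]$ and the martingale convergence theorem. Your one-step computation and the remark that the identity applies to each realized channel $W_n$ are precisely the details being suppressed.
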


\begin{lemma} \label{lemma:I-converges}
If $W$ is symmetric and $G$ is polarizing, then
\begin{align*}
I_\infty =
\begin{cases}
1 & \textrm{w.p. } I(W), \\
0 & \textrm{w.p. } 1 - I(W).
\end{cases}
\end{align*}
\end{lemma}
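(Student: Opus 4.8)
The plan is to follow Arıkan's martingale argument: since $\{(I_n,\mathcal{F}_n)\}$ is a bounded martingale it converges a.s.\ and in $\mathcal{L}^1$ to some $I_\infty$, and the task is to show $I_\infty \in \{0,1\}$ a.s.; the values $I(W)$ and $1-I(W)$ for the probabilities then follow from $\mathbb{E}[I_\infty] = \mathbb{E}[I_0] = I(W)$. To pin down $I_\infty$, I would instead work with $\{Z_n\}$, which is easier to control. First I would record a one-step comparison: by the Corollary, for each realization of $B_{n+1}$ the channel $W_{n+1}$ is one of the $W_n^{(i)}$, and since $G$ is polarizing there is a distinguished index $i^\star$ and $k\geq 2$ with $Z(W_n^{(i^\star)}) = Z(W_n)^k$. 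For the other indices I would need a crude bound $Z(W_n^{(i)}) \leq c' Z(W_n)$ or at least $Z(W_n^{(i)})\le 1$ together with a bound in the other direction; the key point is that with probability $1/\ell$ (the event $B_{n+1}=i^\star$) we get the strict contraction $Z_{n+1}=Z_n^{\,k}\le Z_n^2$.

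The second step is to show $Z_\infty \in \{0,1\}$ a.s. Convergence of $I_n$ a.s.\ forces $I_{n+1}-I_n \to 0$ a.s.; combined with the relation $I^{(i^\star)} = I(W_n^k)$ and the general fact that $I(V^k)$ differs from $I(V)$ unless $I(V)\in\{0,1\}$, one concludes that along the subsequence of times where $B_{n+1}=i^\star$ the channel $W_n$ must have $I(W_n)\to 0$ or $1$. Making this rigorous is cleanest through $Z$: using the standard inequalities relating $I$ and $Z$ for symmetric channels (e.g. $I(W)\ge 1-Z(W)$ and $I(W)\le \sqrt{1-Z(W)^2}$, or the Arıkan bounds), the event $I_n\to$ limit in $(0,1)$ would contradict the repeated squaring $Z_{n+1}=Z_n^k$ happening infinitely often (which drives $Z_n$ to $0$) unless that squaring is compatible with the limit, i.e.\ unless $Z_\infty\in\{0,1\}$. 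Concretely I would argue: by Borel–Cantelli-type reasoning the event $\{B_{n+1}=i^\star\}$ occurs infinitely often a.s.; on that sub-sequence $Z_{n+1}\le Z_n^2$, so if $Z_\infty=\lim Z_n$ existed and were positive it would have to satisfy $Z_\infty \le Z_\infty^2$, forcing $Z_\infty\in\{0,1\}$. The subtlety is that a priori $Z_n$ need not converge, so I would instead show $\liminf$/$\limsup$ coincide by using that $I_n$ converges together with the two-sided $I$–$Z$ bounds to sandwich $Z_n$ into a convergent sequence, then apply the $Z_\infty\le Z_\infty^2$ argument.

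Finally, translating back: $Z_\infty=0$ implies $I_\infty=1$ and $Z_\infty=1$ implies $I_\infty=0$, again by the $I$–$Z$ inequalities. Since $\mathbb{E}[I_\infty]=I(W)$ and $I_\infty$ is $\{0,1\}$-valued, $I_\infty=1$ with probability $I(W)$ and $I_\infty=0$ with probability $1-I(W)$, which is the claim.

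The main obstacle I anticipate is handling the non-distinguished indices $i\ne i^\star$: one must ensure that the steps where $Z$ does \emph{not} square cannot counteract the infinitely many squaring steps — i.e.\ that $Z_n$ cannot drift back up. This requires either a supermartingale-type statement $\mathbb{E}[Z_{n+1}\mid \mathcal{F}_n]\le Z_n$ (which would follow if every branch satisfies $Z(W_n^{(i)})\le Z_n$, plausibly true here but needing a short argument, perhaps via the equivalence $W^{(i)}\equiv$ some degraded channel or via convexity), or a more delicate argument using the a.s.\ convergence of $I_n$ to rule out oscillation. I would try the supermartingale route first: if $\{Z_n\}$ is a bounded supermartingale it converges a.s.\ to some $Z_\infty$, and then the infinitely-often squaring pins $Z_\infty\in\{0,1\}$ directly, sidestepping the oscillation worry entirely.
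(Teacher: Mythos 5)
Your overall skeleton (bounded martingale $I_n$ converges a.s.\ and in $\mathcal{L}^1$, show $I_\infty\in\{0,1\}$ a.s., then read off the probabilities from $\mathbb{E}[I_\infty]=I(W)$) matches the paper, but the route you choose for the crucial step --- forcing $I_\infty\in\{0,1\}$ via the process $\{Z_n\}$ --- has a genuine gap, and the paper avoids it entirely. The paper stays with $I$: $\mathcal{L}^1$ convergence gives $\mathbb{E}[|I_{n+1}-I_n|]\to 0$, while with probability at least $1/\ell$ one has $I_{n+1}=I(W_n^k)$ with $k\ge 2$, so $\mathbb{E}[|I_{n+1}-I_n|]\ge \frac{1}{\ell}\,\mathbb{E}[I(W_n^k)-I(W_n)]$; the Appendix lemma (proved via extremes of information combining, Theorem \ref{thm:extremes}) supplies the \emph{uniform} gap $I(W^k)-I(W)>\eta(\delta)$ whenever $I(W)\in(\delta,1-\delta)$, which forces $\Pr[I_n\in(\delta,1-\delta)]\to 0$. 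You mention the relevant fact only as ``$I(V^k)$ differs from $I(V)$ unless $I(V)\in\{0,1\}$''; the pointwise version is not enough, because the channels $W_n$ change with $n$, and one needs the gap bounded below uniformly over all symmetric channels with mutual information in $(\delta,1-\delta)$. That uniformity is exactly the nontrivial content of the Appendix, and you do not supply it.

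The $Z$-route itself has two concrete failures. First, your preferred fix --- $\{Z_n\}$ is a supermartingale because ``every branch satisfies $Z(W_n^{(i)})\le Z_n$'' --- is false: $W^{(1)}$ is a degraded branch, and already for Ar\i kan's $2\times 2$ kernel on a BEC with $Z(W)=z$ one has $Z(W^{(1)})=2z-z^2>z$ for $z\in(0,1)$. (Whether the averaged statement $\frac{1}{\ell}\sum_i Z^{(i)}\le Z$ holds for general $\ell\times\ell$ kernels is not established in the paper, and you would have to prove it.) Second, the fallback --- sandwiching $Z_n$ into a convergent sequence via $1-I_n\le Z_n\le\sqrt{1-I_n^2}$ --- is circular: when $I_\infty\in(0,1)$ these bounds only confine $Z_n$ to the nondegenerate interval $[1-I_\infty,\sqrt{1-I_\infty^{2}}]$, so $Z_n$ need not converge, and then ``$Z_{n+1}\le Z_n^2$ infinitely often'' yields no contradiction, since between squaring steps $Z_n$ may climb back up within that interval (e.g.\ $Z_n\in[0.1,0.9]$ for all $n$ is perfectly compatible with infinitely many squarings). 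The clean repair is to drop the $Z$ detour and run the argument in expectation on $I_n$, using the uniform gap lemma from the Appendix.
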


\begin{proof}
By the convergence in $\mathcal{L}^1$ of $I_n$ we have $\mathbb{E} [|I_{n+1} - I_n | ] \stackrel{n \to \infty}{\longrightarrow}0$. Since $G$ is a polarizing matrix, Lemma \ref{lemma:square} implies
\begin{align*}
I_{n+1} = I(W_n^k) \textrm{ with probability at least } \frac{1}{\ell},
\end{align*}
for some $k\geq 2$. This in turn implies 
\begin{align}
\mathbb{E} [|I_{n+1} - I_n |] \geq \frac{1}{\ell} \mathbb{E}[I(W_n^k) - I(W_n)] \to 0. \label{eqn:I-repetition}
\end{align}
It is shown in the Appendix that for any symmetric B-DMC $W_n$, if $I(W_n) \in (\delta,1-\delta)$ for some $\delta > 0$, then there exists an $\eta (\delta) > 0$ such that $I(W_n^k) - I(W_n) > \eta (\delta)$. We therefore conclude that convergence in (\ref{eqn:I-repetition}) implies $I_\infty \in \{0,1\}$ w.p.\ 1. The claim on the probability distribution of $I_\infty$ follows from the fact that $\{I_n\}$ is a martingale, i.e., $\mathbb{E} [I_\infty] = \mathbb{E} [I_0] = I(W)$.
\end{proof}

\begin{corollary}
If $W$ is symmetric and $G$ is polarizing, then $\{Z_n\}$ converges a.s.\ to a random variable $Z_\infty$ and
\begin{align*}
Z_\infty =
\begin{cases}
0 & \textrm{w.p. } I(W), \\
1 & \textrm{w.p. } 1 - I(W).
\end{cases}
\end{align*}
\end{corollary}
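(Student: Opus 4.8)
The plan is to obtain the statement directly from Lemma~\ref{lemma:I-converges} together with the standard one-channel estimates relating $I$ and $Z$. Recall from \cite{Ari08} that every B-DMC $W$ satisfies $I(W) \le \sqrt{1 - Z(W)^2}$ and $I(W) \ge \log\frac{2}{1+Z(W)}$; in particular $I(W)=1$ iff $Z(W)=0$, and $I(W)=0$ iff $Z(W)=1$.

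First I would establish a.s.\ convergence of $\{Z_n\}$. By Lemma~\ref{lemma:I-converges}, $I_n \to I_\infty$ a.s.\ with $I_\infty \in \{0,1\}$. On the event $\{I_\infty = 1\}$ the first estimate gives $Z_n^2 \le 1 - I_n^2 \to 0$, hence $Z_n \to 0$; on the event $\{I_\infty = 0\}$ the second estimate gives $1 + Z_n \ge 2^{\,1-I_n} \to 2$, hence $Z_n \to 1$. These two events have total probability one, so $\{Z_n\}$ converges a.s.\ to a limit $Z_\infty$ that equals $0$ on $\{I_\infty=1\}$ and $1$ on $\{I_\infty=0\}$.

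It then remains to read off the law of $Z_\infty$: Lemma~\ref{lemma:I-converges} gives $\Pr[I_\infty=1]=I(W)$ and $\Pr[I_\infty=0]=1-I(W)$, whence $\Pr[Z_\infty=0]=I(W)$ and $\Pr[Z_\infty=1]=1-I(W)$, as claimed.

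I do not expect a genuine obstacle. The only point needing care is to use the two-sided $I$-versus-$Z$ inequalities in their quantitative form, rather than just the endpoint equivalences, so that the entire sequence $\{Z_n\}$---not merely a subsequence---is driven to $0$ or $1$. An alternative would be to argue that $\{Z_n\}$ is a bounded supermartingale and invoke the martingale convergence theorem, but this seems to need an inequality of the form $\sum_i Z^{(i)} \le \ell\, Z(W)$ that is not obviously available for an arbitrary polarizing $G$ (for $\ell \ge 3$ the channels $W^{(i)}$ need not all be product channels), so the route through $I_\infty$ is the cleaner one.
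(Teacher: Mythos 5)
Your proof is correct and follows essentially the same route as the paper: the paper likewise deduces the a.s.\ convergence and law of $Z_\infty$ from $I_n \to I_\infty \in \{0,1\}$ a.s.\ together with the two-sided inequalities between $I(Q)$ and $Z(Q)$ from \cite{Ari08} (the paper uses $I(Q)+Z(Q)\geq 1$ where you use the slightly stronger $I(Q) \geq \log\frac{2}{1+Z(Q)}$, but the argument is the same). Your closing remark about avoiding a supermartingale argument for $\{Z_n\}$ is also well taken, since the paper indeed does not establish $\sum_i Z^{(i)} \leq \ell Z(W)$ for general polarizing $G$.
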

\begin{proof}
The proof follows from the fact that $I_n \to I_\infty$ a.s.\ and the inequalities \cite{Ari08} 
\begin{align*}
I(Q)^2 + Z(Q)^2 \leq 1 \\
I(Q) + Z(Q) \geq 1.
\end{align*}
for any B-DMC $Q$.
\end{proof}

\begin{theorem}\label{thm:l-ary}
Given a symmetric B-DMC $W$, an $\ell\times\ell$ polarizing matrix $G$, and any $\beta < 1/\ell$,
\begin{align*}
\lim_{n \to \infty} \Pr[Z_n \leq 2^{-\ell^{n\beta}}] = I(W).
\end{align*}
\end{theorem}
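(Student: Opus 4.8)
The plan is to adapt Ar\i kan and Telatar's rate-of-polarization argument to the $\ell$-ary tree process $\{W_n\}$. Two deterministic facts are needed. First, there is a constant $D$ depending only on $\ell$ such that $Z(Q^{(i)})\le D\,Z(Q)$ for every symmetric B-DMC $Q$ and every $i$: this is a routine Bhattacharyya estimate, using that $Q^{(i)}$ is built from finitely many copies of $Q$ through the fixed linear map $G$, that $Z(\prod_j Q_j)=\prod_j Z(Q_j)$, and that summing over the $i-1$ ``known'' coordinates costs only a bounded multiplicative factor; iterating gives $Z_{n+1}\le D\,Z_n$ on every branch. Second, by the Corollary to Lemma~\ref{lemma:square} there is a \emph{fixed} index $i^\star$ (determined by $G$ alone) and a fixed $k\ge 2$ with $Z(Q^{(i^\star)})=Z(Q)^k\le Z(Q)^2$; since the $B_n$ are i.i.d., the ``squaring steps'' $\{B_{n+1}=i^\star\}$ are i.i.d.\ events of probability $1/\ell$, and on them $Z_{n+1}\le Z_n^2$. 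The upper bound $\limsup_n\Pr[Z_n\le 2^{-\ell^{n\beta}}]\le I(W)$ is then immediate: for large $n$ one has $2^{-\ell^{n\beta}}<\tfrac12$, so the event lies in $\{Z_n\le\tfrac12\}$, whose indicator tends a.s.\ to $\mathbf 1\{Z_\infty=0\}$ because $Z_n\to Z_\infty\in\{0,1\}$; reverse Fatou and $\Pr[Z_\infty=0]=I(W)$ do the rest.

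For the lower bound, fix $\epsilon>0$ and $\beta<1/\ell$. The first step is to show that once $Z_n$ is small it stays small with high probability. Choose $s>0$ so small that $\tfrac{\ell-1}{\ell}D^s<1$, and then a threshold $\theta>0$ so small that $\tfrac1\ell\theta^{(k-1)s}+\tfrac{\ell-1}{\ell}D^s<1$ and $\theta^{s}<\epsilon/4$. Using the two facts above and $Z_n\le\theta<1$, one gets $\mathbb E[Z_{n+1}^{\,s}\mid\mathcal F_n]\le\rho\,Z_n^{\,s}$ with $\rho<1$ whenever $Z_n\le\theta$; that is, $\{Z_n^{\,s}\}$ behaves as a supermartingale while it stays below $\theta^{s}$. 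Applying Doob's maximal inequality to this process stopped at the first exit above $\theta$, one finds that if $Z_m\le\theta^{2}$ then, conditionally on $\mathcal F_m$, the event $\mathcal A=\{Z_n\le\theta\text{ for all }n\ge m\}$ has probability at least $1-\theta^{s}\ge 1-\epsilon/4$. The second step is to reach such a small value: since $Z_n\to Z_\infty$ a.s.\ with $\Pr[Z_\infty=0]=I(W)$ and $\theta$ is now fixed, there is an $m$ with $\Pr[Z_m\le\theta^{2}]\ge I(W)-\epsilon/2$.

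The third step combines this with a law-of-large-numbers estimate. Write $n=m+t$, let $g_t$ be the number of $j\in\{m{+}1,\dots,m{+}t\}$ with $B_j=i^\star$, fix a small $\delta>0$, and let $\mathcal G_t=\{g_t\ge(1-\delta)t/\ell\}$. By a Chernoff bound on the i.i.d.\ $B_j$, $\Pr[\mathcal G_t^{\,c}]\le\epsilon/4$ for all $t$ large, independently of $\mathcal F_m$; together with the excursion bound a union bound gives $\Pr[\mathcal A\cap\mathcal G_t\mid\mathcal F_m]\ge 1-\epsilon/2$ whenever $Z_m\le\theta^2$. On $\mathcal A$ the quantity $L_n:=\log_2(1/Z_n)$ never drops below $\log_2(1/\theta)$, every squaring step doubles it, and every other step lowers it by at most $\log_2 D$; since there are only $t$ steps in all while $2^{g_t}$ overwhelms $t\log_2 D$ once $g_t$ is of order $t/\ell$, a short computation gives, on $\mathcal A\cap\mathcal G_t$, $L_{m+t}\ge 2^{(1-\delta)t/\ell}\log_2(1/\theta)\,\bigl(1-o(1)\bigr)$, which exceeds $\ell^{(m+t)\beta}$ for all $t$ large because $\beta<1/\ell$. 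Hence $\Pr[Z_n\le 2^{-\ell^{n\beta}}]\ge(I(W)-\epsilon/2)(1-\epsilon/2)\ge I(W)-\epsilon$ for $n$ large; letting $\epsilon\downarrow 0$ and invoking the upper bound proves the theorem.

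The main obstacle is the first step of the lower bound. A single ``bad'' step may multiply $Z_n$ by $D>1$, so a priori a rare but long run of bad steps occurring right after $Z$ first becomes small could push $Z$ back up toward $1$, where the squaring steps are worthless; the $Z_n^{\,s}$-supermartingale estimate is exactly what excludes this, with conditional probability close to $1$. After that the argument is bookkeeping: one must keep the loss of $\log_2 D$ per bad step negligible against the doubly-exponential growth produced by the $\approx t/\ell$ squarings, and this last balance is where the restriction $\beta<1/\ell$ enters.
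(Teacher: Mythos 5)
Your overall strategy coincides with the paper's: the paper gives only a proof sketch, asserting exactly the two facts you establish ($Z_{n+1}\le D\,Z_n$ with probability $1$, with $D=\ell$ there, and $Z_{n+1}\le Z_n^2$ with probability at least $1/\ell$ via the fixed index $i^\star$) and then deferring to an adaptation of the Ar\i kan--Telatar bootstrapping. Your supermartingale-plus-Chernoff bookkeeping is a legitimate way of carrying out that adaptation; the upper bound and the first two steps of the lower bound are sound (the $(1-\log_2 D/\log_2(1/\theta))^{t}$ attrition from the bad steps is not literally a $1-o(1)$ factor, but it can be absorbed into $\delta$ by taking $\theta$ small, so that part is only cosmetically off).

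The genuine gap is in the last line of your third step. On $\mathcal A\cap\mathcal G_t$ you obtain $L_{m+t}\gtrsim 2^{(1-\delta')t/\ell}$, but the target is $\ell^{(m+t)\beta}=2^{(m+t)\beta\log_2\ell}$, so the comparison requires $(1-\delta')/\ell>\beta\log_2\ell$, i.e.\ $\beta<\tfrac{1}{\ell\log_2\ell}$ --- not $\beta<1/\ell$. For $\ell\ge 3$ and $\beta\in\bigl(\tfrac{1}{\ell\log_2\ell},\tfrac{1}{\ell}\bigr)$ your final inequality is simply false: for $\ell=3$ and $\beta=0.3<1/3$ one has $2^{t/3}\ll 3^{0.3t}=2^{0.476t}$. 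More structurally, the only inputs to the argument are a single squaring index hit a $1/\ell$ fraction of the time, which yields $\log_2(1/Z_n)\approx 2^{n/\ell}=N^{1/(\ell\log_2\ell)}$ against a target of $N^{\beta}$; these inputs cannot give the full range $\beta<1/\ell$ unless $\ell=2$. This defect is inherited from the paper's own proof idea, which suffers from the same mismatch between the stated exponent and what the two facts support; repairing it requires tracking how many indices satisfy $Z^{(i)}=Z^{k_i}$ with $k_i\ge 2$ and how large those $k_i$ are (the partial distances of $G$), which is what the corrected, refined version of this theorem in the subsequent literature does.
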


\begin{proof}[Proof Idea]
For any polarizing matrix it can be shown that $Z_{n+1}\leq \ell Z_n$  with probability 1 and that $Z_{n+1} \leq Z_n^2$  with probability at least $1/\ell$.
The proof then follows by adapting the proof of \cite[Theorem 3]{ArT08}.
\end{proof}

\section{Discussion}
Using Ar\i kan's rule for choosing the information bits, polar codes of blocklength $N=\ell^n$ can be constructed starting with any polarizing $\ell\times\ell$ matrix $G$. The encoding and successive cancellation decoding complexities of such codes are $O(N\log N)$. Using similar arguments, it is easy to show that polar codes of blocklength $N=\prod_{i=1}^n \ell_i$ can be constructed from generator matrices of the form $\otimes_i G_i$, where each $G_i$ is a polarizing matrix of size $\ell_i\times\ell_i$. The encoding and successive cancellation decoding complexities of these codes are also $O(N\log N)$.

\section*{Appendix}

In this section we prove the following:
\begin{lemma}\label{lem:combinedI}
Let $W$ be a symmetric B-DMC and let $W^k$ be defined as above. 
If $I(W) \in (\delta, 1-\delta)$ for some $\delta > 0$, then there exists an $\eta (\delta)>0$ such that $I(W^k) - I(W) > \eta (\delta)$.
\end{lemma}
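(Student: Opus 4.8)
The plan is to reduce to the case $k=2$, pass to the representation of a symmetric channel as a mixture of binary symmetric channels, extract from that an explicit formula for $I(W^2)-I(W)$, and then bound that formula below by a quantity depending only on $\delta$. First, since appending further independent channel outputs cannot decrease mutual information, $I(W^k)\ge I(W^2)$ for every $k\ge 2$; it therefore suffices to exhibit an $\eta(\delta)>0$, depending only on $\delta$, with $I(W^2)-I(W)>\eta(\delta)$ whenever $I(W)\in(\delta,1-\delta)$.

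Next I would use the standard fact that a symmetric B-DMC $W$ is equivalent to a mixture of binary symmetric channels: there is a probability measure $\mu$ on $[0,1/2]$ such that $I(W)=\int_0^{1/2}(1-h(p))\,d\mu(p)$, where $h$ is the binary entropy function, and $W^2$ is equivalent to the channel that, conditionally on a pair $(p,p')$ drawn from $\mu\otimes\mu$ and revealed to the receiver, sends the input bit through $\mathrm{BSC}(p)$ and $\mathrm{BSC}(p')$ in parallel. A short computation (the term $h(p\ast p')$ is $H(Y_1\oplus Y_2)$, with $p\ast p':=p(1-p')+p'(1-p)$, since $Y_1\mid Y_1\oplus Y_2$ is uniform) gives $I(\mathrm{BSC}(p)\times\mathrm{BSC}(p'))=1+h(p\ast p')-h(p)-h(p')$. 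Averaging over $\mu\otimes\mu$ and using the symmetry of $\mu\otimes\mu$ to cancel terms then yields the identity
\[
I(W^2)-I(W)=\int\!\!\int\big(h(p\ast p')-h(p)\big)\,d\mu(p)\,d\mu(p'),
\]
whose integrand is nonnegative because $p\le p\ast p'\le 1/2$ for $p,p'\in[0,1/2]$ and $h$ is increasing on $[0,1/2]$.

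It then remains to bound this integral away from $0$. From $I(W)>\delta$ and a Markov-type estimate, $\mu$ assigns mass exceeding $\delta/2$ to $A_\delta:=\{p:h(p)\le 1-\delta/2\}$, which by continuity and strict monotonicity of $h$ is an interval $[0,\,1/2-\epsilon_0(\delta)]$ with $\epsilon_0(\delta)>0$; symmetrically, $I(W)<1-\delta$ forces mass exceeding $\delta/2$ on $B_\delta:=\{p:h(p)\ge\delta/2\}=[\epsilon_1(\delta),\,1/2]$ with $\epsilon_1(\delta)>0$. On $A_\delta\times B_\delta$ one has $p\ast p'=p+p'(1-2p)\ge p+2\epsilon_0(\delta)\epsilon_1(\delta)$, still with $p\ast p'\le 1/2$, so (using concavity of $h$ to see that the increment $h(p+\Delta)-h(p)$ is minimized at the right endpoint of $A_\delta$) $h(p\ast p')-h(p)\ge h\big(1/2-\epsilon_0(\delta)(1-2\epsilon_1(\delta))\big)-h\big(1/2-\epsilon_0(\delta)\big)=:c(\delta)>0$. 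Restricting the integral to $A_\delta\times B_\delta$ gives $I(W^2)-I(W)\ge\mu(A_\delta)\,\mu(B_\delta)\,c(\delta)>\tfrac{\delta^2}{4}c(\delta)=:\eta(\delta)$, which depends only on $\delta$.

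The hard part is this last step: one must squeeze, out of the single scalar constraint $I(W)\in(\delta,1-\delta)$, a uniform-in-$W$ lower bound on the integrand over a set of definite $\mu\otimes\mu$-measure. The delicate point is that $h'$ vanishes at $1/2$, so a priori $h(p\ast p')-h(p)$ could be arbitrarily small; this is precisely what the ``bounded away from $1/2$'' consequence of $I(W)>\delta$ rules out, while the ``bounded away from $0$'' consequence of $I(W)<1-\delta$ is what keeps $p\ast p'$ strictly above $p$. Verifying the BSC-mixture representation and the parallel-BSC mutual-information formula is routine, and I would relegate it to a line or two.
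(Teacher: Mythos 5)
Your proof is correct, and it reaches the conclusion by a genuinely different route than the paper. The paper reduces to $k=2$ exactly as you do, but then invokes the ``extremes of information combining'' theorem of Sutskover--Shamai--Ziv and Land et al.\ (Theorem \ref{thm:extremes}): among symmetric B-DMCs of given capacity, the parallel combination's mutual information is minimized by BSCs. This collapses everything to the one-parameter computation $I(W^2)-I(W)\geq h(2\epsilon\bar\epsilon)-h(\epsilon)$ with $\epsilon=h^{-1}(1-I(W))$, which is bounded away from $0$ once $\epsilon$ is bounded away from $0$ and $\tfrac12$. You instead unpack $W$ as a $\mu$-mixture of BSCs, derive the identity $I(W^2)-I(W)=\iint\bigl(h(p\ast p')-h(p)\bigr)\,d\mu(p)\,d\mu(p')$, and extract a uniform lower bound via two Markov estimates on $\mu$. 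Since the mixture representation is precisely the machinery underlying the cited extremal theorem, your argument is in effect self-contained (you re-prove the piece of Theorem \ref{thm:extremes} that is needed), at the cost of being longer and yielding a less explicit constant $\eta(\delta)=\tfrac{\delta^2}{4}c(\delta)$ versus the paper's clean $h(2\epsilon\bar\epsilon)-h(\epsilon)$. All your steps check out; the only place I would tighten the wording is the monotonicity claim for $p\mapsto h(p\ast p')-h(p)$: its derivative is $(1-2p')h'(p\ast p')-h'(p)\leq 0$, which uses both concavity of $h$ (so $h'(p\ast p')\leq h'(p)$) and $0\leq 1-2p'\leq 1$, slightly more than the bare appeal to concavity with a fixed increment $\Delta$ that you state, since the increment $p'(1-2p)$ itself shrinks as $p$ grows.
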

We will use the following theorem in proving Lemma \ref{lem:combinedI}:
\begin{theorem}[\cite{SSZ05, HuH05}]\label{thm:extremes}
Let $W_1,\dots,W_k$ be $k$ symmetric B-DMCs with capacities $I_1,\dots,I_k$
respectively. Let $W^{[k]}$ denote the channel with transition probabilities
\begin{align*}
W^{[k]}(y_1^k\mid x) = \prod_{i=1}^k W_i(y_i\mid x).
\end{align*}

Also let $W_{BSC}^{[k]}$ denote the channel with transition probabilities
\begin{align*}
W_{\BSC{}}^{[k]}(y_1^k\mid x) =
\prod_{i=1}^k W_{\BSC{(\epsilon_i)}}(y_i\mid x),
\end{align*} 
where $BSC(\epsilon_i)$ denotes the binary symmetric channel with crossover probability  $\epsilon_i \in [0,\frac{1}{2}]$, $\epsilon_i \triangleq h^{-1}(1-I_i)$, where $h$ denotes the binary entropy function. Then, $I(W^{[k]}) \geq I(W_{\BSC{}}^{[k]})$.
\end{theorem}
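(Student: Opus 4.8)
The plan is to reduce the statement to the extremality of the BSC under a single product with an arbitrary symmetric companion channel, and then to settle that single-replacement inequality by combining a BSC-decomposition of symmetric channels with a convexity (Mrs.\ Gerber-type) argument. First I would run a hybrid (telescoping) argument. Setting $V_i = \mathrm{BSC}(\epsilon_1)\otimes\cdots\otimes\mathrm{BSC}(\epsilon_{i-1})\otimes W_{i+1}\otimes\cdots\otimes W_k$, each $V_i$ is a product of symmetric B-DMCs and hence symmetric. Replacing the channels $W_i$ by their capacity-matched BSCs one at a time, it suffices to prove that for every symmetric B-DMC $V$ and every symmetric B-DMC $W$ with $I(W)=I$,
\[
I(W\otimes V)\ \ge\ I\big(\mathrm{BSC}(h^{-1}(1-I))\otimes V\big).
\]
Chaining the $k$ single-step inequalities along the hybrid then yields $I(W^{[k]})\ge I(W_{\BSC{}}^{[k]})$.

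For the single-replacement inequality I would use the decomposition of a symmetric B-DMC into binary symmetric subchannels. Grouping the outputs into the orbits $\{y,\pi(y)\}$ of the symmetry involution $\pi$, the receiver can read off the orbit label $J$ from the output; by symmetry $J$ is independent of the uniform input $X$, and conditioned on $J=j$ the channel $W$ acts as $\mathrm{BSC}(\delta_j)$ for an appropriate $\delta_j$. Since $J\perp X$ and, given $X$, the $W$- and $V$-outputs are conditionally independent, revealing $J$ is free and gives
\[
I(W\otimes V)=\sum_j p_j\, I\big(\mathrm{BSC}(\delta_j)\otimes V\big),\qquad \sum_j p_j\,(1-h(\delta_j))=I,
\]
with $p_j=\Pr[J=j]$. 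Writing $g(c):=I(Q_c\otimes V)$ for $Q_c$ the BSC of capacity $c$, and $c_j=1-h(\delta_j)$, this reads $I(W\otimes V)=\sum_j p_j\,g(c_j)$ with $\sum_j p_j c_j=I$. Hence, if $g$ is convex, Jensen's inequality gives $\sum_j p_j\,g(c_j)\ge g\big(\sum_j p_j c_j\big)=g(I)=I\big(\mathrm{BSC}(h^{-1}(1-I))\otimes V\big)$, which is exactly the inequality sought.

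The crux, and the step I expect to be the main obstacle, is the convexity of $g$. To establish it I would decompose, with $B$ the output of $Q_c=\mathrm{BSC}(\delta)$ (so $\delta=h^{-1}(1-c)$) and $Y_V$ the output of $V$,
\[
g(c)=I(X;B,Y_V)=I(X;Y_V)+H(B\mid Y_V)-h(\delta).
\]
Here $I(X;Y_V)$ is constant in $c$, and with $u:=1-c=h(\delta)$ the term $-h(\delta)=-u$ is affine in $c$. Writing $q_y=\Pr[X=1\mid Y_V=y]$ and $a\conv b:=a(1-b)+b(1-a)$ for the binary convolution, one gets $H(B\mid Y_V)=\sum_y\Pr[Y_V=y]\,h\!\big(q_y\conv h^{-1}(u)\big)$. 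Thus convexity of $g$ reduces to the convexity, for each fixed $q$, of the map $u\mapsto h\!\big(q\conv h^{-1}(u)\big)$ on $[0,1]$, which is precisely the analytic content of Mrs.\ Gerber's Lemma (Wyner--Ziv); I would either cite it or verify it by a direct second-derivative computation. Granting this, $g$ is a nonnegative combination of convex functions plus an affine term, hence convex, which completes the proof.
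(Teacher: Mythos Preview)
The paper does not actually prove this theorem; it merely quotes it from \cite{SSZ05,HuH05} and then invokes it in the proof of Lemma~\ref{lem:combinedI}. So there is no ``paper's own proof'' to compare against.

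That said, your sketch is correct and is essentially the argument given in those cited references. The hybrid reduction to a single replacement is sound because a product of symmetric B-DMCs is again symmetric; the BSC subchannel decomposition of a symmetric B-DMC (with the orbit label $J$ independent of the uniform input) is the standard Gallager decomposition; and your chain-rule identity
\[
g(c)=I(X;Y_V)+H(B\mid Y_V)-h(\delta)
\]
is valid since $H(B\mid X,Y_V)=H(B\mid X)=h(\delta)$. The weights $\Pr[Y_V=y]$ do not depend on $c$, so convexity of $g$ indeed reduces to convexity of $u\mapsto h\big(q\conv h^{-1}(u)\big)$ for fixed $q$, which is precisely Mrs.\ Gerber's Lemma of Wyner and Ziv. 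Jensen then gives the desired inequality with the correct sign. In short: you have reconstructed the standard extremes-of-information-combining proof that the paper chose to cite rather than reproduce.
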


\begin{remark}
Consider the transmission of a single bit $X$ using
$k$ independent symmetric B-DMCs $W_1,\dots, W_k$ with capacities
$I_1,\dots,I_k$. Theorem \ref{thm:extremes} states
that over the class of all symmetric channels with given mutual informations, 
the mutual information between the input and the output vector is minimized when each of the
individual channels is a BSC. 
\end{remark}

\begin{proof}[Proof of Lemma \ref{lem:combinedI}]
Let
$\epsilon \in [0,\frac12]$ be the crossover probability of a BSC with
capacity $I(W)$, i.e., $\epsilon = h^{-1}(1-I(W))$. 
Note that for $k\geq 2$, 
\begin{align*}
I(W^k) \geq I(W^2).
\end{align*}
By Theorem \ref{thm:extremes}, we have $I(W^2) \geq
I(W_{BSC(\epsilon)}^2)$. 
A simple computation shows that 
\begin{align*}
I(W_{BSC(\epsilon)}^2) = 1+h(2\epsilon\bar{\epsilon})- 2 h(\epsilon).
\end{align*}
We can then write
\begin{align}
I(W^k) - I(W) & \geq I(W_{BSC(\epsilon)}^2) - I(W) \notag \\
& = I(W_{BSC(\epsilon)}^2) - I(W_{BSC(\epsilon)}) \notag  \\
& = h(2\epsilon \bar{\epsilon})-h(\epsilon).
\end{align}
Note that $I(W) \in (\delta, 1-\delta)$ implies $\epsilon \in (\phi (\delta), \frac 12 - \phi(\delta))$ where $\phi(\delta)>0$, which in turn implies $h(2\epsilon \bar{\epsilon})-h(\epsilon) > \eta(\delta)$ for some $\eta(\delta) > 0$.
\end{proof}

\end{document}